\newcommand{\pval}{\ensuremath{p}}
\newcommand{\nb}{\ensuremath{nb}}
\newcommand{\Neutral}{\ensuremath{\mathcal{N}}}
\newtheorem{definition}{Definition}
\newtheorem{proposition}{Proposition}
\newtheorem{theorem}{Theorem}
\newtheorem{remark}{Remark}
\begin{document}
\title{The Neutrality Boundary Framework: Quantifying Statistical Robustness Geometrically}
\author{Thomas F. Heston\\
Department of Family Medicine, University of Washington,\\
1959 NE Pacific Street, Box 356390,\\
Seattle, WA 98195, USA}
\maketitle
\begin{abstract}
We introduce the Neutrality Boundary Framework (NBF), a set of geometric metrics for quantifying statistical robustness and fragility as the normalized distance from the neutrality boundary—the manifold where effect equals zero. The neutrality boundary value $\nb \in [0,1)$ provides a threshold-free, sample-size invariant measure of stability complementing traditional effect sizes and \pval-values. We derive the general form $\nb = |\Delta-\Delta_0|/(|\Delta-\Delta_0|+S)$, where $S>0$ is a scale parameter for normalization; prove boundedness and monotonicity, and provide domain-specific implementations: Risk Quotient (binary outcomes), partial $\eta^2$ (ANOVA), and Fisher z-based measures (correlation). Unlike threshold-dependent fragility indices, NBF quantifies robustness geometrically across arbitrary significance levels and statistical contexts.
\end{abstract}
\textbf{Keywords:} robustness, fragility, effect size, statistical stability, geometric inference
\section{Introduction}
Quantifying the robustness of statistical findings remains fundamental for inference. The reproducibility crisis in the biomedical and social sciences \cite{ioannidis_why_2005} has intensified the scrutiny of statistical stability. Existing fragility metrics (Fragility Index, Fragility Quotient) identify minimum outcome changes needed to cross $p=0.05$ thresholds \cite{walsh_statistical_2014, ahmed_does_2016, kampman_statistical_2022, gonzalez-del-hoyo_fragility_2023}. The robustness index extends this framework by examining changes in sample size while maintaining distributional integrity \cite{heston_robustness_2023}. Although widely applied, these indices remain threshold-dependent and probabilistic in nature. Effect-size reporting frameworks \cite{cohen_statistical_1988, lakens_2013, sullivan_2012} improve interpretability, yet do not quantify robustness itself.
We introduce a geometric alternative. The Neutrality Boundary Framework (NBF) quantifies robustness as the normalized distance from the neutrality boundary \Neutral—the set of parameter values where the effect equals zero. The resulting measure $\nb \in [0,1)$ is continuous and threshold-free, complementing both traditional effect sizes \cite{cohen_statistical_1988} and significance testing conventions \cite{wasserstein_asa_2016}. This geometric framework focuses on stability rather than magnitude, providing a unified robustness metric across statistical contexts.
Our contributions: (1) We establish the general NBF with formal proofs of boundedness, monotonicity, and sample-size invariance (Section~2). (2) We derive domain-specific implementations for binary, multi-group, and correlation contexts (Section~3). (3) We demonstrate NBF as a unified robustness measure independent of significance testing conventions.
\section{General Framework}
\begin{definition}[Neutrality Boundary Framework]
\label{def:boundary}
Let $\Theta$ be a parameter space and $T(\mathbf{X})$ a statistic. The \emph{neutrality boundary framework} \Neutral\ is the set of parameter values for which the scientific effect is null: risk ratio $=1$, risk difference $=0$, $r=0$ (zero linear association), etc.
\end{definition}
\begin{definition}[Neutrality Boundary]
\label{def:nbf}
For a contrast with observed value $\Delta$ and neutrality $\Delta_0 \in \mathcal{N}$, the \emph{neutrality boundary value} is
\begin{equation}
\label{eq:nb}
\nb = \frac{|\Delta - \Delta_0|}{|\Delta - \Delta_0| + S} \in [0,1),
\end{equation}
where $S>0$ is a scale parameter ensuring normalization.
\end{definition}
\begin{proposition}[Boundedness and monotonicity]
\label{prop:bounds}
For any $S>0$ constant with respect to $\Delta$, we have $\nb \in [0,1)$ with $\nb=0$ iff $\Delta=\Delta_0$, and $\nb$ increases monotonically in $|\Delta-\Delta_0|$.
\end{proposition}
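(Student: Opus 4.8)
The plan is to reduce the claim to the analysis of a single real variable. Since $\nb$ depends on $\Delta$ and $\Delta_0$ only through the nonnegative quantity $u = |\Delta - \Delta_0|$, I would set $f(u) = u/(u+S)$ for $u \in [0,\infty)$ and verify each assertion for $f$. Because $S>0$ is constant with respect to $\Delta$, the denominator $u+S$ is strictly positive throughout, so $f$ is well-defined on the entire ray and the substitution is lossless.

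For boundedness and the equality case I would argue directly from the definition. At $u=0$ the numerator vanishes while the denominator equals $S>0$, giving $f(0)=0$; conversely $f(u)=0$ forces the numerator $u$ to vanish, i.e. $\Delta=\Delta_0$, which establishes the ``iff''. For $u>0$ both numerator and denominator are positive, so $f(u)>0$, and since $u < u+S$ (again using $S>0$) we obtain $f(u)<1$. Hence $\nb = f(u) \in [0,1)$, with the supremum $1$ approached but never attained as $u\to\infty$.

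For monotonicity I would rewrite $f(u) = 1 - S/(u+S)$. As $u$ increases on $[0,\infty)$ the denominator $u+S$ increases strictly, so $S/(u+S)$ decreases strictly (recall $S>0$), whence $f$ increases strictly. Equivalently one may differentiate to obtain $f'(u) = S/(u+S)^2 > 0$. Either route yields strict monotonicity of $\nb$ in $u = |\Delta-\Delta_0|$, which is precisely the stated claim.

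This argument is elementary, so there is no substantive obstacle; the only point requiring care is bookkeeping on strictness---distinguishing the closed lower endpoint (attained exactly at $\Delta=\Delta_0$) from the open upper endpoint (a limit that is never achieved)---together with confirming that the reduction to the scalar $u$ is faithful, which it is since $\nb$ is by construction a function of $|\Delta-\Delta_0|$ alone.
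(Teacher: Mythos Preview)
Your proposal is correct and follows essentially the same route as the paper's own proof: reduce to the scalar $u=|\Delta-\Delta_0|$, observe $0\le u<u+S$ to obtain $\nb\in[0,1)$ with equality at $u=0$, and establish monotonicity via the derivative $S/(u+S)^2>0$. The only addition is your alternative algebraic rewriting $f(u)=1-S/(u+S)$, which is a harmless extra justification rather than a different approach.
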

\begin{proof}
Since $|\Delta-\Delta_0| \geq 0$ and $S>0$, the denominator exceeds the numerator, so $\nb < 1$. Lower bound: $\nb \geq 0$ with equality iff $\Delta=\Delta_0$. Monotonicity: taking the derivative with respect to $|\Delta-\Delta_0|$ (treating $S$ as constant),
\[
\frac{\partial \nb}{\partial |\Delta-\Delta_0|} = \frac{S}{(|\Delta-\Delta_0|+S)^2} > 0
\]
for all $S>0$.
\end{proof}
\begin{proposition}[Sample-size invariance]
\label{prop:invariance}
If $\Delta$ and $S$ represent effect magnitude and intrinsic dispersion (not standard error), then the population $\nb$ is invariant to sample size $n$. For estimators, $\mathbb{E}[\hat{\nb}]$ is asymptotically invariant when $\hat{\Delta}$ and $\hat{S}$ are consistent for population quantities.
\end{proposition}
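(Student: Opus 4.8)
The plan is to treat the two claims separately, since the population statement is essentially a deterministic observation while the estimator statement requires a genuine limiting argument. For the population $\nb$, I would simply note that Equation~\eqref{eq:nb} expresses $\nb$ as a fixed, deterministic function of the population contrast $\Delta$, the neutrality value $\Delta_0$, and the intrinsic scale $S$. Under the stated hypothesis that $S$ is intrinsic dispersion rather than a standard error, none of these three quantities depends on the sample size $n$; in particular $S$ does not carry the $n^{-1/2}$ factor that a standard error would. Hence the population $\nb$ is constant in $n$, which is the asserted invariance. The entire content of this part is the contrast with a standard-error normalization, so I would state explicitly that had one used $S \propto \sigma/\sqrt{n}$ instead, then $\nb \to 1$ as $n\to\infty$, making clear why the ``not standard error'' qualifier is doing all the work.

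For the estimator claim, I would set $g(\delta,s)=|\delta-\Delta_0|/(|\delta-\Delta_0|+s)$, which is well defined and continuous on $\{(\delta,s): s>0\}$: the absolute value is non-smooth at $\delta=\Delta_0$ but remains continuous there, and $s>0$ keeps the denominator bounded away from $0$. Given consistency $\hat{\Delta}\xrightarrow{p}\Delta$ and $\hat{S}\xrightarrow{p}S$ with $S>0$, the continuous mapping theorem yields $\hat{\nb}=g(\hat{\Delta},\hat{S})\xrightarrow{p}g(\Delta,S)=\nb$. I would first record that, because $\hat{S}\xrightarrow{p}S>0$, the event $\{\hat{S}>0\}$ on which $\hat{\nb}$ is well defined has probability tending to $1$, so the edge case $\hat{S}\le 0$ can be absorbed into a vanishing-probability term.

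The final step converts convergence in probability into convergence of the mean, and this is where the only real subtlety lies, since $\hat{\nb}\xrightarrow{p}\nb$ does not by itself imply $\mathbb{E}[\hat{\nb}]\to\nb$. Here Proposition~\ref{prop:bounds} supplies exactly the missing ingredient: it guarantees $\hat{\nb}\in[0,1)$ whenever $\hat{S}>0$, so the sequence is uniformly bounded by $1$ and therefore uniformly integrable. Convergence in probability together with uniform integrability (equivalently, bounded convergence applied along an almost surely convergent subsequence, then a subsequence argument to recover the full sequence) gives $L^1$ convergence, whence $\mathbb{E}[\hat{\nb}]\to\nb$. Because the limit $\nb$ is the fixed population value established in the first part, $\mathbb{E}[\hat{\nb}]$ is asymptotically free of $n$, which is the claimed asymptotic invariance.

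I anticipate the main obstacle to be precisely this last upgrade from $\hat{\nb}\xrightarrow{p}\nb$ to $\mathbb{E}[\hat{\nb}]\to\nb$: one must resist transferring consistency of the plug-in components to the expectation without an integrability hypothesis. The uniform boundedness from Proposition~\ref{prop:bounds} resolves it cleanly, but I would be careful to invoke it on the event $\{\hat{S}>0\}$ and to confirm that the complementary event contributes at most $\Pr(\hat{S}\le 0)\to 0$ to the mean. A secondary point worth a single sentence is the non-differentiability of $|\cdot|$ at $\Delta_0$, which does not obstruct the argument because only continuity, not differentiability, is required for the continuous mapping theorem.
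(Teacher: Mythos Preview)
Your proposal is correct and follows the same two-part structure as the paper's own proof: first observe that the population $\nb$ is a fixed function of $n$-free quantities (the paper phrases this as $\Delta,S$ being $O(1)$ in $n$ versus $\mathrm{SE}\propto n^{-1/2}$), then pass to the limit for $\mathbb{E}[\hat{\nb}]$ via consistency. The paper's argument for the second claim is a single clause (``$\mathbb{E}[\hat{\nb}]\to\nb$ by consistency of $\hat{\Delta}$ and $\hat{S}$''), whereas you correctly recognize that consistency alone does not yield convergence of expectations and supply the missing pieces---continuous mapping for $\hat{\nb}\xrightarrow{p}\nb$, then uniform boundedness from Proposition~\ref{prop:bounds} to upgrade to $L^1$---so your version is strictly more rigorous than the paper's while remaining the same approach in substance.
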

\begin{proof}
Standard error scales as $\mathrm{SE} \propto n^{-1/2}$. If $\Delta$ and $S$ measure population-level quantities (e.g., mean difference and pooled SD), both are $O(1)$ in $n$. Thus the population $\nb = f(\Delta, S)$ with $f$ independent of $n$. Finite-sample estimators $\hat{\nb}$ have $n$-dependent variance, but $\mathbb{E}[\hat{\nb}] \to \nb$ as $n \to \infty$ by consistency of $\hat{\Delta}$ and $\hat{S}$.
\end{proof}
\begin{proposition}[Valid scale parameters]
\label{prop:scale}
A scale parameter $S$ in \eqref{eq:nb} is valid if: (i) $S>0$ for all data, (ii) $\nb$ is invariant to affine transformations, (iii) $S$ reflects natural dispersion (e.g., pooled variance, maximum independence variance).
\end{proposition}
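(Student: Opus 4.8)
The plan is to read this statement not as an opaque list of desiderata but as an implication to be established: I would treat clauses (i) and (iii) as the substantive hypotheses and \emph{derive} clause (ii), since affine invariance is the only part with genuine mathematical content. Clause (i), positivity of $S$, is a pointwise sign condition that immediately secures $\nb\in[0,1)$ through Proposition~\ref{prop:bounds}, while ``natural dispersion'' in clause (iii) is a modeling stipulation realized by the named examples. I would therefore restate the proposition operationally as: if $S>0$ is a dispersion measure that transforms with the \emph{same degree of homogeneity} as the contrast under the relevant reparametrization group, then $\nb$ is invariant, and the listed dispersions satisfy this equivariance.

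First I would fix the transformation group, which on an interval scale is the affine group $x\mapsto ax+b$ with $a\neq 0$, and record how each ingredient transforms. A difference contrast (mean difference, risk difference) is translation-invariant and scales linearly, $|\Delta-\Delta_0|\mapsto |a|\,|\Delta-\Delta_0|$, because the shift $b$ cancels in any difference while $a$ passes through; the neutrality point $\Delta_0=0$ maps to $a\cdot 0=0$, so the contrast's distance from $\Neutral$ scales cleanly. I would then impose the matching requirement on the scale, $S\mapsto |a|\,S$, and substitute both into \eqref{eq:nb} to obtain
\[
\nb \mapsto \frac{|a|\,|\Delta-\Delta_0|}{|a|\,|\Delta-\Delta_0| + |a|\,S} = \frac{|\Delta-\Delta_0|}{|\Delta-\Delta_0| + S} = \nb,
\]
so the common factor $|a|$ cancels and invariance follows, with $S>0$ keeping the value in $[0,1)$.

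The step I expect to be the main obstacle is reconciling the \emph{degrees of homogeneity} of numerator and scale, which the phrase ``pooled variance'' in clause (iii) threatens to break: a variance scales as $a^2$, not $|a|$, so pairing a degree-$1$ mean-type contrast with a degree-$2$ variance-type scale would leave an uncancelled factor $|a|$ and destroy invariance. I would resolve this by making the matching principle explicit---both $|\Delta-\Delta_0|$ and $S$ must be homogeneous of a common degree $k$, after which the surviving factor is $a^k$ and still cancels---and then verifying it case by case across the implementations: the mean-difference construction pairs degree-$1$ quantities (mean difference with pooled standard deviation), whereas the variance-explained construction pairs degree-$2$ quantities (effect and residual both variances), so in each admissible pairing the degrees agree. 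Finally I would confirm existence and strict positivity for the named examples, noting that a pooled standard deviation and the maximum-independence variance are positive whenever the data are nondegenerate, so clause (i) holds nonvacuously and the three conditions are jointly satisfiable rather than merely stipulated.
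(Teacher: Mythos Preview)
The paper does not prove Proposition~\ref{prop:scale} at all: it is stated as a \emph{definition} of ``valid'' scale parameter---a conjunctive list of three criteria---and is followed only by a remark about interpretation. There is no argument in the paper that any clause follows from the others, nor is any such implication asserted.

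Your proposal therefore rests on a reinterpretation the text does not support. You explicitly recast the proposition as the implication $(\text{i})\wedge(\text{iii})\Rightarrow(\text{ii})$ and then supply a homogeneity-cancellation argument for it. The algebra you give is correct: if $|\Delta-\Delta_0|$ and $S$ are homogeneous of a common degree under $x\mapsto ax+b$, the factor cancels and $\nb$ is invariant. But this conclusion requires you to \emph{strengthen} clause~(iii) into a precise equivariance condition (``same degree of homogeneity as the contrast''), which is not what ``reflects natural dispersion (e.g., pooled variance, maximum independence variance)'' says. Indeed, you yourself note that the listed example ``pooled variance'' has degree~$2$ and would break invariance against a degree-$1$ contrast; your fix---reading it as pooled standard deviation, or insisting both sides be degree~$2$---is a repair of the proposition, not a proof of it as written. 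So the argument is sound mathematics attached to a statement the paper does not make, while the paper itself treats Proposition~\ref{prop:scale} as a stipulative criterion requiring no proof.
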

\begin{remark}
Proposition~\ref{prop:scale} ensures $\nb$ measures intrinsic stability rather than arbitrary scaling. Domain-specific choices of $S$ are determined by the contrast's natural variability structure.
\end{remark}
\section{Domain-Specific Implementations}
\subsection{Categorical outcomes: Risk Quotient}
The Risk Quotient defines the geometric distance from independence in contingency tables. This concept is related to the \emph{Unit Fragility Index (UFI)} \cite{feinstein_unit_1990}, a one-unit geometric perturbation of the difference in proportions under fixed margins. For $2\times 2$ contingency tables under fixed margins, a one-unit outcome exchange between cells theoretically shifts $|ad-bc|$ by $N$, corresponding to a lattice spacing of $4/N$ in the derived $RQ_{2\times 2} = 4|ad-bc|/n^2$ scale. In balanced designs this increment equals Feinstein's UFI. In practice, $RQ$ is a fixed table quantity; the $4/N$ relation represents the minimum discrete spacing of possible $RQ$ values, not an empirical increment.
The UFI was subsequently connected to probability-based significance reversal, demonstrating how a single data change could transition a contrast across a $p$-value threshold \cite{walter_statistical_1991}. These works illustrate the bridge between geometric and probabilistic conceptions of robustness. Neutrality Boundary Framework (NBF) extends this lineage by formalizing fragility and robustness in geometric terms—normalized distance from neutrality—while remaining complementary to probability-based inference.
\paragraph{Canonical two-arm case.}
For a $2\times 2$ table with cells $(a,b;c,d)$ and total $n=a+b+c+d$, the Risk Quotient is
\begin{equation}
\mathrm{RQ} = \frac{4|ad-bc|}{n^2} \in [0,1].
\end{equation}
In canonical NBF form,
\begin{equation}
\nb = \frac{\mathrm{RQ}}{1+\mathrm{RQ}} = \frac{|ad-bc|}{|ad-bc| + n^2/4} \in [0,1),
\end{equation}
which matches $\frac{|\Delta-\Delta_0|}{|\Delta-\Delta_0|+S}$ with $\Delta=|ad-bc|$, $\Delta_0=0$, $S=n^2/4$. The $2\times 2$ form yields a discrete geometric lattice where one unit change corresponds to a step of $4/N$ in $RQ$ space—aligning with Feinstein’s UFI in balanced designs.
\paragraph{Generalization to $r\times c$ tables.}
For higher-dimensional contingency tables, $RQ$ generalizes as a normalized absolute deviation from independence:
\begin{equation}
RQ = \frac{1}{n}\sum_{i,j}\bigl|O_{ij}-E_{ij}\bigr|, \qquad E_{ij} = \frac{(\mathrm{row}_i\ \mathrm{total})(\mathrm{col}_j\ \mathrm{total})}{n}.
\end{equation}
This maintains boundedness ($0\le RQ\le 1$) and a geometric interpretation as average per-cell distance from neutrality (independence manifold $O_{ij}=E_{ij}$). Unit-step increments depend on the marginal structure and generally do not reduce to $4/N$. Applying the canonical transformation $\nb = RQ/(1+RQ)$ preserves normalization and monotonicity across multi-category designs.
\subsection{Multi-group ANOVA}
For fixed-effect one-way ANOVA with between-group df $\mathrm{df}_b$, within-group df $\mathrm{df}_w$, and F-statistic:
\begin{equation}
\nb = \eta^2_{\mathrm{partial}} = \frac{\mathrm{df}_b \cdot F}{\mathrm{df}_b \cdot F + \mathrm{df}_w} \in [0,1).
\end{equation}
Partial eta-squared provides a natural 0--1 bounded measure of effect magnitude in ANOVA contexts \cite{richardson_eta_2011}. Alternatively, a monotone 0--1 transformation is $\nb = f/(1+f)$ where $f = \sqrt{\eta^2/(1-\eta^2)}$ is Cohen's $f$ \cite{cohen_statistical_1988}. For one-way designs, $\eta^2 = \eta^2_{\text{partial}}$, so both forms are monotone-equivalent. Neutrality: all group means equal.
\subsection{Correlation: Distance to Independence}
For Pearson correlation $r \in (-1,1)$ with Fisher z-transformation \cite{fisher_frequency_1915} $z = \operatorname{atanh}(r)$:
\begin{equation}
\nb = \mathrm{DTI} = \frac{|z|}{1+|z|} \in [0,1).
\end{equation}
The Fisher z-transform stabilizes the variance of the correlation coefficient and provides an unbounded scale suitable for the canonical NBF form. Neutrality: $r=0$ (zero linear association).
\begin{theorem}[Unified form]
\label{thm:unified}
Each domain-specific implementation satisfies Definition~\ref{def:nbf} with appropriate choice of $\Delta$, $\Delta_0$, and $S$ meeting Proposition~\ref{prop:scale}.
\end{theorem}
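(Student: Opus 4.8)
The plan is to prove the theorem by case analysis over the four implementations of Section~3, since the statement is a verification that each stated formula is an instance of the canonical form \eqref{eq:nb} for a suitable triple $(\Delta,\Delta_0,S)$ satisfying Proposition~\ref{prop:scale}. For each case I would (a) exhibit the triple, (b) check by direct algebra that substituting it into \eqref{eq:nb} reproduces the stated $\nb$, and (c) verify the three conditions (i)--(iii) of Proposition~\ref{prop:scale}. Steps (a)--(b) are essentially bookkeeping; the content lies in (c).

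For the categorical cases, in the $2\times 2$ table I take $\Delta=|ad-bc|$, $\Delta_0=0$, $S=n^2/4$, which the text already shows yields $\nb=\mathrm{RQ}/(1+\mathrm{RQ})$; for $r\times c$ tables I take $\Delta=\sum_{i,j}|O_{ij}-E_{ij}|$, $\Delta_0=0$, $S=n$, giving $\nb=\Delta/(\Delta+n)=\mathrm{RQ}/(1+\mathrm{RQ})$. I would note the consistency check that in a $2\times 2$ table each cell deviation equals $|ad-bc|/n$, so $\sum_{i,j}|O_{ij}-E_{ij}|=4|ad-bc|/n$ and the two RQ definitions coincide. For ANOVA I rewrite $\eta^2_{\mathrm{partial}}=\mathrm{SS}_b/(\mathrm{SS}_b+\mathrm{SS}_w)$ using $\mathrm{df}_b F=\mathrm{SS}_b\,\mathrm{df}_w/\mathrm{SS}_w$, so the triple is $\Delta=\mathrm{SS}_b$, $\Delta_0=0$, $S=\mathrm{SS}_w$. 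For correlation I take $\Delta=z=\operatorname{atanh}(r)$, $\Delta_0=0$, $S=1$, which reproduces $\mathrm{DTI}=|z|/(1+|z|)$ directly, with neutrality $r=0\Leftrightarrow z=0$.

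The verification of Proposition~\ref{prop:scale} then proceeds per domain. Condition (i), $S>0$, is immediate in every case ($n^2/4>0$, $n>0$, $\mathrm{SS}_w>0$ for non-degenerate data, and $S=1$). Condition (iii) is established by identifying each $S$ with the intrinsic variability of the contrast: the maximal-determinant normalizer $n^2/4$ and the total count $n$ for tables, the within-group sum of squares for ANOVA, and the unit scale of the variance-stabilized Fisher axis for correlation. Condition (ii) requires specifying, in each domain, the relevant affine group and checking invariance: common rescaling $(\,\cdot\,)\mapsto\lambda(\,\cdot\,)$ of all cell counts for tables (under which numerator and denominator both scale as $\lambda^2$ or $\lambda$), location--scale maps $y\mapsto\alpha y+\beta$ of the response for ANOVA (under which $\mathrm{SS}_b$ and $\mathrm{SS}_w$ both scale by $\alpha^2$ and shifts cancel), and separate affine maps of each variable for correlation (under which $r$, hence $z$, is invariant).

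The main obstacle I expect is condition (ii), because "affine invariance" is not a single statement but must be reinterpreted as the appropriate transformation group in each context; the contingency-table case in particular forces care, since counts admit only a restricted common-rescaling symmetry rather than the full location--scale invariance available for continuous responses. A secondary, more interpretive difficulty is condition (iii): justifying that each chosen $S$ genuinely captures "natural dispersion" rather than an arbitrary normalizing constant is a modeling judgment rather than a pure computation, so I would frame it by tying each $S$ to a maximal- or reference-variance quantity intrinsic to the neutrality manifold.
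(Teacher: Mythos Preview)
Your approach is correct and structurally identical to the paper's own proof sketch: both proceed by case analysis, exhibiting for each domain a triple $(\Delta,\Delta_0,S)$ that collapses the stated formula to the canonical form \eqref{eq:nb}. The paper's sketch is terser: it lists the triples for the $2\times2$, ANOVA, and correlation cases and then appeals only to Proposition~\ref{prop:bounds}, without explicitly checking the conditions of Proposition~\ref{prop:scale} or treating the $r\times c$ generalization. You go further on both fronts, and your verification of condition (ii) via the appropriate transformation group in each domain is exactly the right way to make the affine-invariance clause precise. One genuine difference worth noting is your ANOVA parameterization: the paper leaves the triple implicitly as $\Delta=\mathrm{df}_b F$, $S=\mathrm{df}_w$, whereas you rewrite $\eta^2_{\mathrm{partial}}=\mathrm{SS}_b/(\mathrm{SS}_b+\mathrm{SS}_w)$ and take $\Delta=\mathrm{SS}_b$, $S=\mathrm{SS}_w$. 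Your choice is cleaner for the invariance check, since both sums of squares scale by $\alpha^2$ under $y\mapsto\alpha y+\beta$, while $\mathrm{df}_w$ is dimensionless and the paper's implicit $S$ would not transparently satisfy condition (iii) as a dispersion quantity. Your anticipated obstacles are well judged; they are interpretive rather than technical, and the paper itself does not resolve them beyond the level you propose.
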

\begin{proof}[Proof sketch]
Binary ($2 \times 2$): $\Delta = |ad-bc|$, $\Delta_0=0$, $S=n^2/4$, yielding $\nb = \mathrm{RQ}/(1+\mathrm{RQ})$. ANOVA: Transform of F-statistic via $\eta^2_{\text{partial}}$, which can be expressed as $\nb = \mathrm{df}_b \cdot F/(\mathrm{df}_b \cdot F + \mathrm{df}_w)$. Correlation: $\Delta=|z|$ (Fisher z-transform), $\Delta_0=0$, $S=1$, yielding $\nb=|z|/(1+|z|)$. Each satisfies boundedness and monotonicity by Proposition~\ref{prop:bounds}.
\end{proof}
\section{Properties and Interpretation}
\begin{table}[ht]
\centering
\caption{Interpretation of neutrality boundary values}
\label{tab:interpretation}
\begin{tabular}{@{}lll@{}}
\toprule
$\nb$ range & Interpretation & Meaning \\
\midrule
$0$--$0.05$ & Extremely fragile & Near neutrality \\
$0.05$--$0.10$ & Fragile & Slight separation \\
$0.10$--$0.25$ & Moderately robust & Stable separation \\
$0.25$--$0.50$ & Robust & Strong separation \\
$>0.50$ & Very robust & Far from neutrality \\
\bottomrule
\end{tabular}
\end{table}
Unlike \pval-values (which quantify $\Pr(\text{data} \mid H_0)$ and are subject to widespread misinterpretation \cite{wasserstein_asa_2016}), $\nb$ measures geometric distance from neutrality. Unlike confidence intervals (constructed from $\mathrm{SE} \propto n^{-1/2}$), $\nb$ is population-scale invariant; estimator variance decreases with $n$ (Proposition~\ref{prop:invariance}). A finding with $\nb=0.42$ indicates the same geometric stability whether $n=30$ or $n=3000$, though inferential certainty increases with $n$. The geometric robustness quantified by $\nb$ aligns with recent critiques of statistical fragility and reproducibility in biomedical research \cite{kampman_statistical_2022, gonzalez-del-hoyo_fragility_2023, ho_fragility_2022}.
\paragraph{Complementary to probability.}
Geometric measures ($\nb$, effect size) describe data properties; probabilistic measures (\pval-values, CI) describe sampling uncertainty. The distinction parallels the ASA's call for moving beyond binary interpretations of statistical significance \cite{wasserstein_asa_2016}. Low \pval-value with high $\nb$ indicates significant and stable separation. Low \pval-value with low $\nb$ reflects statistical significance without geometric robustness (common in large samples). High \pval-value with high $\nb$ suggests underpowered detection of genuine separation. Both perspectives are necessary for complete inference.
\section{Discussion}
NBF provides a threshold-free geometric alternative to probability-based fragility metrics. Unlike data-manipulation approaches (FI, FQ) or sample-size scaling methods (robustness index) \cite{heston_robustness_2023, heston_redefining_2024}, NBF maintains both distributional integrity and sample-size invariance through geometric normalization. By normalizing distance from neutrality, $\nb \in [0,1)$ enables direct comparison across contexts while maintaining sample-size invariance. The framework assumes well-defined neutrality boundaries and appropriate scale parameters. Distributional assumptions (normality, independence) apply as in standard inference.
\paragraph{Extensions.}
Future work may address: (i) asymptotic distributions of $\nb$ estimators, (ii) confidence intervals for $\nb$, (iii) extensions to survival analysis, longitudinal models, and continuous two-sample comparisons, (iv) connections to information-geometric distances.
\section*{Declarations}
\paragraph{Competing interests} The author declares no competing interests.
\paragraph{Funding} No funding was received for this work.
\paragraph{Author contributions} The author conceived, developed, and wrote the manuscript; all derivations were verified by the author, who takes full responsibility for the content.
\paragraph{Data availability} No data were used.
\bibliographystyle{vancouver}
\bibliography{refs}
\end{document}